\newcommand{\PP}{\mathbb P}
\DeclareMathOperator{\Cov}{Cov} 
\DeclareMathOperator{\diag}{diag} 
\DeclareMathOperator{\PV}{PV}
\DeclareMathOperator{\LGD}{LGD}\DeclareMathOperator{\EXP}{EXP}
\DeclareMathOperator{\PTL}{PTL}
\newtheorem{theo}{Theorem}[section]
\newtheorem{lemma}[theo]{Lemma}
\theoremstyle{definition}
\newtheorem{remark}[theo]{Remark}
\numberwithin{equation}{section}
\numberwithin{figure}{section}
\numberwithin{table}{section}
\begin{document}

\title{Two Models of Stochastic Loss Given Default}
\author{Simone Farinelli\thanks{The opinions expressed in this document are our own and do not represent the ones of Kraus Partner Investment Solutions  AG.
        The risk control principles presented are not necessarily used by  Kraus Partner Investment Solutions  AG.
        This document does not provide a comprehensive description of concepts and methods used in risk control at  Kraus Partner Investment Solutions  AG.}\\
         Kraus Partner Investment Solutions  AG\\Milit\"arstrasse 76 \\ CH-8004 Zurich, Switzerland\\Email: farinelli@krauspartner.com
        \\\\
        Mykhaylo Shkolnikov \\Mathematical Sciences Research Institute \\17 Gauss Way \\Berkeley, CA 94720-5070 , USA\\Email: mshkolni@gmail.com
        }

\bigskip
\bigskip
\bigskip

\maketitle \thispagestyle{empty} \nopagebreak

\maketitle

\begin{abstract}
We propose two structural models for stochastic losses given default
which allow to model the credit losses of a portfolio of defaultable
financial instruments. The credit losses are integrated into a structural
model of default events accounting for correlations between the
default events and the associated losses. We show how the models can
be calibrated and analyze the impact of correlations between the
occurrences of defaults and recoveries by testing our models for a
representative sample portfolio.
\end{abstract}

\section{Introduction}
Many credit risk models assume that the losses given default (LGDs)
are a deterministic proportion of the exposures subject to
impairment and ignore the fact that LGDs can fluctuate according to
the economic cycle. For example \citet{ARS01}, \citet{ABRS05} show
that default rates and recovery rates are strongly negatively
correlated and measure a correlation of $0.75$ between yearly
average default rate and loss rates in the United States. They
provide strong correlation evidence between macro growth variables
(such as GDP) and recovery rates and test the impact of correlated
defaults and LGDs inferring an understatement of forecasted
portfolio losses by up to $30\%$. \par  \citet{GORV09} show strong
dependence of default rates and recovery rates on the economic cycle
for the time frame $2000$-$2009$ and detect strong negative
correlations between default and recovery rates in various U.S.
industries (e.g.: banking and finance, broadcasting and media)
between $2005$ and $2009$.\par An appropriate LGD model should have
a reasonable economic interpretation, and it should allow for a calibration by
available data and be based on a proper statistical setting. In
particular, the dependence structure between LGDs and default
indicators should not arise from a deterministic functional
relation.\par \citet{Fr00a} and \citet{Fr00b} propose a structural
model with a systematic risk factor representing the state of the
economy and driving both defaults and LGDs. The dependence of
default indicator and LGDs on the common risk factor gives rise to a
strong correlation between the two, which is in line with the
empirical evidence. Another single factor model has been proposed by
\citet{Ta04} and extended by \citet{Py03} who unifies Frye's and
Tasche's approaches.\par \citet{Hi06} introduces dependent LGD
modelling into a multi-factor latent variable framework providing a
good fit to corporate bond data. Marginal distributions for
indicator functions and LGDs can be specified.\par \citet{HKW07}
model default probabilities and LGDs jointly by generalized linear
mixed effect models with probit link function and inverse logit
function, respectively. All factors are observable, some represent
the general macroeconomical environment and others take obligors'
specificities into account.\par Inspired by the past research in
this field, the following manuscript proposes two models for
stochastic losses given default which are correlated across firms
and correlated with occurences of default events. Both models extend
a structural model for the default events to a joint structural
model for both defaults and LGDs. The described models have the
following key features:
\begin{itemize}
\item The LGDs are stochastic, correlated with each other and the occurences of default events.
\item The LGDs follow beta distributions with means estimated from historical data.
\item The shapes of the beta distributions vary across firms in such a way that the density function is concave if the corresponding credit instrument is backed by a collateral and convex otherwise.
\end{itemize}
The main differences between the two models are the following:
\begin{itemize}
\item In the first model the parameters of the LGD distributions are random depending on the expected LGD and the risk factors driving the losses in the case of default whereas in the second model the former are deterministic functions of the expected LGD.
\item In the first model the complete joint distribution of the LGDs and default indicators can be estimated whereas in the second model the correlations between the risk factors driving defaults and LGDs can be fitted exactly. At the same time the number of model parameters which have to be estimated coincides for the two models.
\end{itemize}
We believe that both models are statistically sound and have a
reasonable economic interpretation. Moreover, we provide  a
calibration methodology, which we apply to the available historical
data and test the models on a representative sample portfolio.

\section{A structural model for correlated defaults}

Following \cite{Pi04}, we present in this section a model for the
joint equity dynamics of counterparties appearing in a portfolio of
financial instruments. Since the default events are triggered by the
value of equity crossing a default barrier, this will lead naturally
to a structural model for the joint dynamics of defaults. The two
models for losses given default can be then viewed as attachments to
this model making a joint simulation of default events and losses
given default possible.

\subsection{Joint equity dynamics}

Let us consider $N$ firms and introduce the firm index $f=1,\dots, N$. Furthermore, we assume that these firms are spread over $I$ different industry categories $i=1,\dots,I$ and $R$ different regions $r=1,\dots,R$. We denote by $i_f$ and by $r_f$ the region and the industry category, respectively, which the firm $f$ belongs to. Firm $f$ belongs therefore to the industry-region cell denoted by
\begin{eqnarray*}
(i_f,r_f)\in\{1,\dots,I\}\times\{1,\dots,R\}=:\mathcal{IR}.
\end{eqnarray*}
We assume now that the equity process for all firms $E_t:=[E_t^1,\dots,E_t^N]^{\dagger}$ obeys in continuous time the
\textit{Geometric Brownian Motion} described by the SDE
\begin{eqnarray}
\frac{dE_t}{E_t}=\mu_t dt+\sigma_t dB_t+\tau_t dW_t,
\end{eqnarray}
\noindent whereby
\begin{itemize}
\item the processes $B_t:=[B_t^{1},\dots,B_t^{IR}]^{\dagger}$ and $W_t:=[W_t^1,\dots,W_t^N]^{\dagger}$ are two independent standard multivariate Brownian motions in $\mathbf{R}^{IR \times 1}$ and $\mathbf{R}^{N\times 1}$, respectively,
\item the functions $t\in[0,+\infty[\mapsto\mu_t:=[\mu_t^1,\dots,\mu_t^N]^{\dagger},     \sigma_t:=[\sigma_t^1,\dots,\sigma_t^N]^{\dagger}$ and $\tau_t:=[\tau_t^1,\dots,\tau_t^N]^{\dagger}$ are called     \textit{Drift}, \textit{Homoskedastic Volatility} and  \textit{Heteroskedastic Volatility}, respectively. More precisely, by homoskedasticity we mean that for all times $t$ and all firms $f$ we have        $\sigma_t^f=\sigma_t^{(i_f,r_f)}$, while in the heteroskedastic case the volatility components explicitly depend on the firm, that is $\tau_t=\tau_t^{f}$ \textit{cannot} be written as $\tau_t=\tau_t^{(i_f,r_f)}$,
\item products and fractions are unterstood componentwise.
\end{itemize}
\noindent The time discretization of the SDE leads to the panel model
\begin{eqnarray}
y_t=\alpha_t+\beta_t+\varepsilon_t
\end{eqnarray}
in which the following components in $\mathbf{R}^{N\times 1}$ represent
\begin{eqnarray*}
y_t:=\log E_t-\log E_{t-1}:&\quad\textit{ log returns } \text{for the firm equities,}\\
\alpha_t:=\mu_t-\frac{1}{2}(\sigma_t^{2}+\tau_t^{2}):&\quad\textit{deterministic firm specific components,}\\
\beta_t:=\sigma_t(B_t-B_{t-1})\thicksim\mathcal{N}\left(0,\diag(\sigma_t^{2})\right):&\quad\textit{region-industry systematic components,}\\
\varepsilon_t:=\tau_t(W_{t}-W_{t-1})\thicksim\mathcal{N}\left(0,\diag(\tau_t^{2})\right):&\quad\textit{firm specific idiosyncratic components.}
\end{eqnarray*}
\noindent The log-return of the asset value $y_t$ is therefore decomposed into a deterministic part $\alpha_t$, a stochastic homoskedastic part $\beta_t$ and an independent heteroskedastic part $\varepsilon_t$. Note that the multiplication and squaring operations applied to vectors are meant componentwise.\\\quad\\
We concentrate now on the systematic region-industry components which will be linked with the losses given default below. The former can be written as
\begin{eqnarray}
\beta_t=b_t\gamma_t+v_t
\end{eqnarray}
by setting
\begin{eqnarray}
\gamma_t\stackrel{d}{=}\mathcal{N}(0,I),\\
v_t\stackrel{d}{=}\mathcal{N}(0,\chi_t^2)
\end{eqnarray}
independently of each other and i.i.d. over time and letting $b^{(i,r)}_t\in\mathbf{R}$ be the industry-region ''beta'' coefficient. The process $\gamma^{r}_t$ can be interpreted as the region performance of log asset returns. The random variables in the components of the vector $\gamma_t$ can be defined as principal components of $\beta_t$ and $v_t$ is a residual quantity. The industry specific effects are homoskedastic within a region, that is they have the same variance ${(\chi^r_t)}^2$. Denoting by
\begin{eqnarray}
\rho^{r_1,\,r_2}_t:=\Cov_{t-1}^{\text{Stat}}(\gamma^{r_1}_t,\gamma^{r_2}_t)
\end{eqnarray}
the historical covariance available at time $t$, we set the covariance between the region-industry effects to the statistical covariance of the latter up to time $t-1$:
\begin{eqnarray}
\Cov_{t-1}^{\text{Stat}}\Big(\beta^{(i_1,\,r_1)}_t,\beta^{(i_2,\,r_2)}_t\Big)=\rho^{i_1,\,i_2}_tb^{(i_1,\,r_1)}_tb^{(i_2,\,r_2)}_t+\chi^{r_1}_t\chi^{r_2}_t\delta^{i_1,\,i_2}\delta^{r_1,\,r_2}.
\end{eqnarray}

\subsection{Joint default dynamics}

Assuming that the joint equity dynamics is normalized in such a way that a default of firm $f$ occurs if the value of equity goes below $1$ and denoting by $X_t^f$ the default indicator process of firm $f$, the marginal conditional default probabilities read
\begin{eqnarray*}
\mathbb{E}_t\left[\left.X^f_{t+1}\right|X^f_{t}=0\right]=\PP_t[\log E_{t+1}^f\le0|\log E^f_{t}>0]
=\PP_t[G^f_{t+1}\le g^f_t|\log E^f_{t}>0]=\Phi(g^f_t)
\end{eqnarray*}
\noindent where we have defined
\begin{eqnarray}
G^f_t=\frac{\beta^{(i_f,r_f)}_t+\varepsilon^f_t}{\sqrt{\Big(\sigma_t^{(i_f,r_f)}\Big)^2+ \Big(\tau_t^f\Big)^{2}}}
\thicksim\mathcal{N}(0,1),\\
g^f_t
=\frac{-\log E^f_t-\mu^f_{t+1}+\frac{1}{2}\Big(\Big(\sigma_{t+1}^{(i_f,r_f)}\Big)^{2}+\Big(\tau_{t+1}^f\Big)^2\Big)}
{\sqrt{\Big(\sigma_{t+1}^{(i_f,r_f)}\Big)^2+\Big(\tau_{t+1}^f\Big)^2}},\\
\Phi(x)=\frac{1}{\sqrt{2\pi}}\int_{-\infty}^x du\, e^{-\frac{u^2}{2}}.
\end{eqnarray}
The joint distribution of $\{G^f_t\}_{f,t}$ can be simulated sequentially as a multivariate Gaussian distribution with vanishing conditional expectation and conditional covariance matrix having ones on the diagonal and off-diagonal entries
\begin{eqnarray*}
\Cov_{t-1}^{\text{Stat}}(G^{f_1}_t,G^{f_2}_t)=\Cov_{t-1}^{\text{Stat}}\left(\frac{\beta^{(i_{f_1},r_{f_1})}_t}{\sqrt{\Big(\sigma_t^{(i_{f_1},r_{f_1})}\Big)^2+\Big(\tau_t^{f_1}\Big)^2}},\frac{\beta^{(i_{f_2},r_{f_2})}_t}{\sqrt{\Big(\sigma_t^{(i_{f_2},r_{f_2})}\Big)^2+\Big(\tau_t^{f_2}\Big)^2}}\right).
\end{eqnarray*}
Therefore, to simulate defaults, the default probabilities $p^f_t$ from the macroeconomical model are inverted to
$c^f_t=\Phi^{-1}(p^f_t)$ and the vector-valued random variable \{$G^f_t\}_{f,t}$ is simulated. A simulated default event for firm $f$ occurs at the first time for which $G^f_t$ falls below $c^f_t$.

\section{Market \& credit exposures and losses given default}\label{defexplgd}

We now present a formal definition for market \& credit exposures and losses given default. To this end, let us consider a financial instrument at time $t=0,1,\dots,T$ and assume that it is $a_t-$rated, for a rating $a_t\in\mathcal{J}=\{1,2,\dots,J\}$, which is valid for the interval $]t-1, t]$. The financial instrument, or product, is  identified at time $t$ with the stochastic discrete cashflow stream $\{C_s\}_{s \ge t}$ that it generates from time $t$ on. In particular, it accounts for future possible defaults or rating migrations, but \textit{not} for recovery streams. When we evaluate it to determine its market value at time $t$, we need to use the $a_t$-term structure of interest rates. Assuming for the moment that there is only one currency, and after having denoted discount
factors by $d^{a_t}=d^{a_t}_{t,s}$ for $s\ge t$ and short rates by $r^{a_t}=r^{a_t}_{t}$, we can write the \textit{Present Value} of the product at time $t$ as
\begin{eqnarray*}
\PV_t(C;d^{a_t})=\sum_{s\ge t}\mathbb{E}_t^*\left[\exp\left(-\int_t^sdu\,r^{a_t}_{u}\right)C_s\right]
=\sum_{s\ge t}d^{a_t}_{t,s}\mathbb{E}_t^{*,s}[C_s].
\end{eqnarray*}
Thereby, $\mathbb{E}_t^*$ denotes the risk neutral conditional
expectation and $\mathbb{E}_t^{*,s}$ for $s \ge t$ the $s$-forward
neutral conditional expectation. We know that risk neutral
measure(s) and forward neutral measure(s) exist by virtue of the
Fundamental Theorem of Asset Pricing (see \citet{Bj04}) Chapters
10.2, 10.3 and 24.4). The present value of the product is its
\textit{Theoretical Price} and represents an approximation of its
\textit{Market Value} in the real world market.\par If a default
occurs at time $t$ for the state of nature $\omega\in\Omega$, then
the product cashflows are annihilated, that is $C_s(\omega)=0$ for
all $s\ge t$, and there possibly exist \textit{recovery} cashflows
$\{C^{\,\text{Rec}}_s(\omega)\}_{s \ge t}$, which will mitigate the
loss. These must be evaluated with respect to the government term
structure, which is considered to be default risk free. Therefore,
an approximation for the market recovery value is the theoretical
price of the recovery cashflow stream:
\begin{eqnarray*}
\PV_t(C^{\,\text{Rec}};d^{\text{Gov}})
=\sum_{s\ge t}\mathbb{E}_t^*\left[\exp\left(-\int_t^sdu\,r^{\text{Gov}}_{u}\right)C^{\,\text{Rec}}_s\right]
=\sum_{s\ge t}d^{\text{Gov}}_{t,s}\mathbb{E}_t^{*,s}[C^{\,\text{Rec}}_s].
\end{eqnarray*}
\textit{Exposure at Market \& Credit Risk} is then defined as
\begin{eqnarray}
\EXP_t(C):=\PV_t(C;d^{a_t})=\sum_{s\ge t}d^{a_t}_{t,s}\mathbb{E}_t^{*,s}[C_s].
\end{eqnarray}
\noindent Remark that this exposure definition does not presume that the counterparty is in the default state at time $t$ or later, but covers all possible future states, the default and the non-default ones. It is just the present value of the cashflow stream for all possible future states. Now, we can formally define the \textit{Loss Given Default} at
time $t$ for the financial instrument as
\begin{eqnarray}
\LGD_t(C;C^{\,\text{Rec}}):=1-\frac{\PV_t(C^{\,\text{Rec}};d^{\text{Gov}})}{\PV_t(C;d^{a_t})}
=1-\frac{\sum_{s\ge
t}d^{\text{Gov}}_{t,s}\mathbb{E}_t^{*,s}[C^{\,\text{Rec}}_s]}{\sum_{s\ge
t}d^{a_t}_{t,s}\mathbb{E}_t^{*,s}[C_s]}.
\end{eqnarray}
This definition is consistent with the definition for the recovery
of market value (see \citet{La04} Chapter 5.7 and \citet{Sc03}
Chapter 6).

\begin{remark}[Industry Standards for Traditional Credit Risk Management]

The definitions for loss given default and exposure presented here differ from the \textit{industry credit risk standards}, where:
\begin{itemize}
\item Exposure is typically termed Exposure-at-Default and is understood as conditional expectation of the theoretical value \textit{in the case of default}.
\item Loss given default is typically understood as a fraction.
\item Exposures are always positive. Typically, for structured products this is enforced by taking only the positive part of the distribution of the theoretical values.
\item For a \textit{loan or a bond} the exposure is typically defined in \textit{nominal terms}, in other words, as the \textit{issued amount}.
\item For a \textit{lombard} credit the exposure does not require netting of collateral, which needs to be treated separately.
\end{itemize}
\noindent Industry standards have of course their fundament. They service the book value accounting perspective, which segregates credit portfolio profits (cash-in-flows) from losses (cash-out-flows). In the traditional credit risk management approach \textit{only} losses are considered and these in nominal terms. For a book of plain-vanilla loans, bonds or  mortgages these approach can be reconciled with a mark-to-market valuation by modelling the
cash-in-flows. But for structured products like those from the investment banking, the requirement that exposure must be positive impedes diversification. It leads to a conservative overestimation of credit risk, which for a risk manager is on one hand reassuring, but on the other annoying, since it binds more risk capital than effectively necessary. We believe that our definition is the appropriate one to provide a \textit{fair valuation} for all products and to allow for an aggregation of both market and credit risks. As a matter of fact, using the notation introduced above, the value of a portfolio of $N$ financial instruments at time $t$ can be written as
\[V_t=\sum_{j=1}^N\left(1-X_t^j\LGD_t(C^j;C^{\,j,\,\text{Rec}})\right)\EXP_t(C^j),\]
whereas $X_t^j$ denotes the default indicator at time $t$ for the $j$th financial instrument. This formula shows how the portfolio value both depends on market and credit risk factors. Exposures depends on market risk factors, default indicators depend on credit risk factors, and loss-given-defaults depend on both.
\end{remark}

\section{Losses given default models}

In this section we explain two models for losses given default which allow for a joint simulation of default events and losses given default. Hereby, not only the default events are correlated, but the correlations between default indicators and losses given default are included as well. For both models we first present the theoretic framework and then explain how the model can be calibrated and used for simulations of the credit loss of a portfolio.

\subsection{Model A}

\subsubsection{Theoretic framework}

We propose a stuctural model for losses given default which is connected to the default times model through correlations between the risk factors in the two models. Following the recent literature we make the assumption that the loss given default of the financial instrument $f$ in the industry-region cell $(i,r)$ at time $t$ follows a beta-distribution $Beta(\mu_t^f,\nu_t^f)$. The corresponding parameters are modelled as functions of the risk factors driving the losses occurring at defaults. These risk factors are hereby of two types:
\begin{itemize}
\item systematic risk factors $Y_t^{(i,r)}$ characteristic for a industry-region cell $(i,r)$ at time $t$,
\item macroeconomic risk factor $\overline{Y}_t$ varying with the economic cycle and common to all industry-region cells.
\end{itemize}
Moreover, we assume that the distribution of losses given default of counterparties of industry $i$ in region $r$ depends on the value of the combined risk factor
\begin{eqnarray}
Z_t^{(i,r)}=\eta_t^{(i,r)}Y_t^{(i,r)}+\sqrt{1-\Big(\eta_t^{(i,r)}\Big)^2}\overline{Y}_t
\end{eqnarray}
and work directly with the latter for the purposes of parameter estimation and simulation. The risk factors $Z_t^{(i,r)}$ are assumed to be jointly normally distributed with the credit risk factors specified in section 2 with the only non-trivial covariances being
\begin{eqnarray}
cov\Big(Z_t^{(i,r)},Z_t^{(i',r')}\Big)=\theta^{i,r,i',r'},\\
cov\Big(Z_t^{(i,r)},\beta_t^{(i',r')}\Big)=\psi^{i,r,i',r'}.
\end{eqnarray}
It remains to specify the dependence of the parameters $\mu_t^f$, $\nu_t^f$ of the beta distribution on the risk factor $Z_t^{(i,r)}$. We note first that since the mean of the beta distribution $Beta(\mu,\nu)$ is given by $\frac{\mu}{\mu+\nu}$, it suffices to specify $\mu_t^f$ as a function of $Z_t^{(i,r)}$, since the value of $\nu_t^f$ is then automatically determined after a value for the expected loss given default is prescribed. Since $Z_t^{(i,r)}$ reflects the regional and industry type specificities as well as the economic cycle, we are quite free in our choice of a functional dependence between the stochastic parameter $\mu_t^f$ and the stochastic factor $Z_t^{(i,r)}$, as long as this dependence is given by a bijective function. Later, when calibrating the model different choices of the function will lead to different covariance parameters $\theta$ and $\psi$. Since $Z_t^{(i,r)}$ is Gaussian and $\mu_t^f$ assumes always strictly positive values, we can choose
\begin{eqnarray}
\mu_t^f=e^{Z_t^{(i,r)}},
\end{eqnarray}
because the exponential is a bijective function from the real axis to the positive reals. The mean of the loss given default distribution $m_t^f$ is already given as the result of estimation from historical losses given default data for the industry-region cell $(i_f,r_f)$ or as the result of expert judgement based on information about the corresponding counterparty. Finally, to ensure that the mean of the loss given default distribution is matched, we set
\begin{eqnarray}
\nu_t^f=\mu_t^f\cdot\frac{1-m_t^f}{m_t^f}.
\end{eqnarray}

\subsubsection{Parameter estimation and simulations}

We assume that the model for default times is already implemented and the time series of the risk factors are  already estimated. In this section we propose a simple and fast to implement estimation procedure which allows the joint modelling of the default times and the losses given default according to the previous sections. The estimation procedure is composed of the following three steps:
\begin{enumerate}
\item For each industry-region cell $(i,r)$ at time $t$ we compute the maximum likelihood estimate of the parameter $\mu_t^f$ (taking the same value for all firms $f$ belonging to the same cell).
\item We obtain a time series for the combined loss given default risk factor for each industry-region cell $(i,r)$ up to time $t$ from the estimates of the parameters $\mu_t^f$ by the formula
\begin{eqnarray}
Z_t^{(i,r)}=\log \mu_t^f
\end{eqnarray}
\item Lastly, we use the time series for $\beta_t^{(i,r)}$ and $Z_t^{(i,r)}$ to obtain an estimate on the covariances
\begin{eqnarray*}
cov\Big(Z_t^{(i,r)},\beta_t^{(i',r')}\Big): && 1\leq i,i'\leq I,\;1\leq r,r'\leq R,\\
cov\Big(Z_t^{(i,r)},Z_t^{(i',r')}\Big): && 1\leq i,i'\leq I,\;1\leq r,r'\leq R.
\end{eqnarray*}
\end{enumerate}
Consecutively, the loss given default distributions can be modelled in the following two steps:
\begin{enumerate}
\item At each point of time $t$ model the loss given default risk factors $Z_t^{(i,r)}$ jointly with the credit risk factors $\beta_t^{(i,r)}$, $\varepsilon_t^f$ as a multivariate normal random variable with the previously estimated covariance structure.
\item Set the parameters of the loss given default distribution for the financial instrument $f$ in cell $(i,r)$ at time $t$ by the formulas
\begin{eqnarray}
\mu_t^f=e^{Z_t^{(i,r)}},\\
\nu_t^f=\mu_t^f\cdot\frac{1-m_t^f}{m_t^f}
\end{eqnarray}
where $m_t^f$ is again the estimated or prescribed expected loss given default.
\end{enumerate}

\subsection{Model B}

In this second model losses given default will be also modelled as random variables assuming that the expectation of the loss given default conditioned on a default event affecting the corresponding company
\begin{eqnarray}
\overline{\LGD_t^f}=\mathbb{E}[\LGD_t^f|X_t^f=1]
\end{eqnarray}
has been already determined by estimation or expert judgement. In
previous work (see \citet{Le08} and \citet{Sc03}) the losses given
default (conditional on a default event) are assumed to be
beta-distributed:
\begin{eqnarray}
\LGD_t^f|(X_t^f=1)\sim Beta(\mu_t^f,\nu_t^f)
\end{eqnarray}
where
\begin{eqnarray}
\mu_t^f = (\kappa-1)\overline{\LGD_t^f}\\
\nu_t^f = (\kappa-1)(1-\overline{\LGD_t^f}).
\end{eqnarray}
Note that for any choice of the shape parameter $\kappa$ the desired expectation is matched. For the special choice   $\kappa:=4$ it was believed - on the basis of a long literature list - that the density of the loss given default distribution has a concave shape. This turns out to be true only if the expected loss given default is near $50\%$. However, the density is no more concave if the latter is close to $100\%$ or to $0\%$. If the expected losses given default are close to $0\%$ or to $100\%$, then the corresponding probability density near $1$ and $0$ is not negligible.\par
We assume as before that the conditional expectation $\overline{\LGD_t^f}$ of the loss given a default event is our best guess for the average loss rate in the case of a default event. Moreover, we recall that in the case of a collateralized loan the empirical loss given default distribution tends to have a concave shape, putting most of the weight on a small interval around the expectation. This is of course explained by the non-trivial recovery values. For unsecured loans the empirical loss given default distribution tends to have a convex form accounting for frequent loss given default values close to $1$. Therefore, we propose to model the density as a symmetric function with respect to the expectation. To this end, we apply a linear transformation to a beta distributed random variable $B\sim Beta(\mu,\nu)$:
\begin{eqnarray}
\LGD_t^f|(X_t^f=1)=&(\overline{\LGD_t^f}-\delta_t^f)B +  (\overline{\LGD_t^f}+\delta_t^f)(1-B),
\end{eqnarray}
where
\begin{eqnarray}
\mu=\nu=2,\quad \delta_t^f=0.2\min(\overline{\LGD_t^f}, 1-\overline{\LGD_t^f})
\end{eqnarray}
in case that the financial instrument $f$ is backed by collateral and
\begin{eqnarray}
\mu=\nu=0.5,\quad \delta_t^f=\min(\overline{\LGD_t^f}, 1-\overline{\LGD_t^f})
\end{eqnarray}
otherwise. This choice guarantees the symmetry of the probability density function with an appropriate support in a symmetric interval $[\overline{\LGD_t^f}-\delta_t^f, \overline{\LGD_t^f}+\delta_t^f]$ around the expectation. This choice reflects our knowledge (and ignorance) about the loss-given-default: we have a best guess given by its expectation but no opinion about its skewness. Therefore, a symmetric distribution supported by a neighborhood of the expectation is a legitimate choice.\\\quad\\
To specify the joint distribution of default events and losses given default recall that the standardized company equity return conditional on default is truncated Gaussian distributed. More precisely,
\begin{eqnarray*}
G_t^f|(X_t^f=1)&=G_t^f|(G_t^f<\Phi^{-1}(p^f_t))\sim F_{G_t^f|(G_t^f<\Phi^{-1}(p^f_t))}
\end{eqnarray*}
where
\begin{eqnarray}
F_{G_t^f|(G_t^f<\Phi^{-1}(p^f_t))}(x)=\frac{\Phi(x)}{p^f_t}
\end{eqnarray}
for all $x<p^f_t$ and $1$ otherwise. Therefore, assuming that the simulation of standardized equity returns has already occurred, we simulate losses given default in such a manner that they follow the specified marginal distributions and that their correlations with the standardized equity returns match their historical values. More exactly, we set
\begin{eqnarray}
\LGD_t^f=F^{-1}_{\LGD_t^f|(X_t^f=1)}(F_{H^f_t}(H^f_t)),
\end{eqnarray}
where
\begin{eqnarray*}
H^f_t=F_{G_t^f|(G_t^f<\Phi^{-1}(p^f_t))}(G_t^f|(G_t^f<\Phi^{-1}(p^f_t)))\left(1+\xi^f_t\left(V_t^f-\frac{1}{2}\right)\right).
\end{eqnarray*}
Thereby,
\begin{itemize}
\item $F_{\LGD_t^f|(X_t^f=1)}$ is the cumulative probability distribution function of the marginal specified above,
\item $V_t^f$ is a $[0,1]$-valued uniformly distributed random variable, independent of all $G_t^f$,
\item $\xi^f_t$ is a parameter which can be chosen in such a manner that the covariance between $F_{\LGD_t^f|(X_t^f=1)}(\LGD_t^f|(X_t^f=1))$ and $F_{G_t^f|G_t^f<\Phi^{-1}(p^f_t)}(G_t^f|(G_t^f<\Phi^{-1}(p^f_t)))$ attains its historical value.
\end{itemize}
\noindent Thus, $H^f_t$ can be viewed as the standardized company equity return, conditioned on the occurence of a default event, transformed to the uniform distribution and perturbed by the auxilliary loss given default risk factor $V_t^f$. The distribution function of the random variable $H^f_t$ is given by the following lemma.

%\begin{lemma}\label{Lemma1}
%Let $\rho\in]0,2[$ and $U,V$ be two standard uniformly distributed
%independent random variables. Then,
% \[Y:= U\left(1+\rho\left(V-\frac{1}{2}\right)\right)\]
% has support in $[0, 1+\rho/2]$ and distribution
% function
%\[F_Y(y)=
%            \left\{
%            \begin{array}{ll}
%                \frac{y}{\rho}\log\left(\frac{2+\rho}{2-\rho}\right), & \hbox{$y < 1 - \frac{\rho}{2}$} \\
%                \frac{y-1}{\rho} + \frac{1}{2} + \frac{y}{\rho} \log\left(\frac{1 + \frac{\rho}{2}}{Y}\right), & %\hbox{$y \ge  1 - \frac{\rho}{2}$.}
%            \end{array}
%            \right.
%\]
%\end{lemma}

%\begin{proof}
%Since $U$ and $V$ are standard uniformly distributed, their
%probability density functions are the indicator function for the
%interval $[0,1]$. Furthermore, since $U$ and $V$ are independent,
%the joint probability density function is the indicator function of
%the square $[0,1]^2$. It follows
%\[
%\begin{split}
%F_Y(y)&=P[Y \le y]=\int_{\{u(1+\rho(v-\frac{1}{2})) \le y\}}du\,dv\,I_{[0,1]^2}=\\
%&=u_0+\int_{u_0}^1 du \left(\frac{y}{\rho\varepsilon
%u}+\frac{1}{2}-\frac{1}{\varepsilon}\right),
%\end{split}
%\]
%\noindent Integration completes the proof.\\
%\end{proof}

\begin{lemma}\label{Lemma1}
Let $\xi\in]2,\infty[$ and $U,V$ be two standard uniformly distributed independent random variables. Then,
\begin{eqnarray*}
H= U\left(1+\xi\left(V-\frac{1}{2}\right)\right)
\end{eqnarray*}
has support in $[1-\xi/2, 1+\xi/2]$ and distribution function
\begin{eqnarray*}
F_H(y)=
            \left\{
            \begin{array}{ll}
                \frac{1}{2}-\frac{1}{\xi}+\frac{y}{\xi}+\frac{y}{\xi}\cdot\Big(\log(\xi/2-1)-\log |y|\Big), & \hbox{$1 - \frac{\xi}{2} < y < 0$} \\
                \frac{1}{2}-\frac{1}{\xi}+\frac{y}{\xi}+\frac{y}{\xi}\cdot\Big(\log(1+\xi/2)-\log y\Big), & \hbox{$0 < y <  1 + \frac{\xi}{2}$.}
            \end{array}
            \right.
\end{eqnarray*}
\end{lemma}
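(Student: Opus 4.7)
The key observation is that $W := 1+\xi(V-\tfrac{1}{2})$ is uniformly distributed on $[1-\xi/2,\,1+\xi/2]$ with density $1/\xi$, and since $\xi>2$ this interval straddles zero, so $W$ may be positive or negative. The pair $(U,W)$ has joint density $\tfrac{1}{\xi}\mathbbm{1}_{[0,1]}(u)\,\mathbbm{1}_{[1-\xi/2,1+\xi/2]}(w)$, and the support of $H=UW$ is $[1-\xi/2,1+\xi/2]$ because $U\in[0,1]$ forces $H$ to lie between $0$ and $W$ (the sign being inherited from $W$). My plan is to compute $F_H(y)=\mathbb{P}[UW\le y]$ by conditioning on $W$ and splitting into the cases $y<0$ and $y>0$, taking care with the direction of the inequality when dividing by $w$.

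For $y\in(1-\xi/2,0)$ I note that $UW\le y<0$ forces $W<0$, and moreover, since $U\in[0,1]$ and $W<0$ give $UW\in[W,0]$, one needs $W\le y$. On this set, the condition $UW\le y$ becomes $U\ge y/w$ (flipping the inequality because $w<0$), giving conditional probability $1-y/w\in[0,1]$. Hence
\begin{eqnarray*}
F_H(y)=\int_{1-\xi/2}^{y}\frac{1}{\xi}\left(1-\frac{y}{w}\right)dw=\frac{1}{\xi}\Bigl[w-y\log|w|\Bigr]_{1-\xi/2}^{y},
\end{eqnarray*}
which after simplification yields the first branch of the stated formula.

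For $y\in(0,1+\xi/2)$ I split the $w$-integral at $0$ and at $y$. For $w\in[1-\xi/2,0)$ we have $UW\le 0\le y$ almost surely, contributing $(\xi/2-1)/\xi=\tfrac{1}{2}-\tfrac{1}{\xi}$. For $w\in[0,y]$ we still have $UW\le w\le y$ with probability one, contributing $y/\xi$. Finally for $w\in(y,1+\xi/2]$ the condition reads $U\le y/w\in(0,1)$, contributing $\int_y^{1+\xi/2}\frac{y}{\xi w}\,dw=\frac{y}{\xi}\bigl(\log(1+\xi/2)-\log y\bigr)$. Summing the three pieces gives the second branch.

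The proof is essentially a careful bookkeeping of cases; the only subtle step is handling the sign of $w$ when rearranging $uw\le y$, and recognizing that for $y<0$ only the sub-interval $w\le y$ of $\{w<0\}$ contributes (the interval $y<w<0$ being excluded because there $uw\ge w>y$). Once the regions are identified correctly the antiderivatives are elementary and the logarithms in the stated formula arise precisely from the $\int dw/w$ pieces.
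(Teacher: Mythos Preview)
Your proof is correct and follows essentially the same approach as the paper: both compute $F_H(y)$ as a double integral of the joint density over the sublevel set $\{uw\le y\}$, with the paper integrating in the original $(u,v)$ variables and you making the linear substitution $w=1+\xi(v-\tfrac12)$ first. The paper's own proof simply writes down the integral and states ``Integration completes the proof,'' so your detailed case split (according to the sign of $w$ and the position of $y$ relative to $w$) is exactly the bookkeeping that the paper omits.
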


\begin{proof}
Since $U$ and $V$ are standard uniformly distributed, their probability density functions are the indicator functions for the interval $[0,1]$. Furthermore, since $U$ and $V$ are independent, their joint probability density function is the indicator function of the square $[0,1]^2$. It follows
\begin{eqnarray}
F_H(y)&=P[H\le y]=\int_{\{u(1+\xi(v-\frac{1}{2})) \le y\}}du\,dv\,I_{[0,1]^2}(u,v).
\end{eqnarray}
\noindent Integration completes the proof.\\
\end{proof}

The choice of the parameters $\xi^f_t$ can be implemented by matching the historical covariance between $F_{\LGD_t^f|(X_t^f=1)}(\LGD_t^f|(X_t^f=1))$ and $F_{G_t^f|G_t^f<\Phi^{-1}(p^f_t)}(G_t^f|(G_t^f<\Phi^{-1}(p^f_t)))$. The equation connecting the two quantities is specified by the following lemma.

\begin{lemma}
The parameters $\xi^f_t$ fulfill

%   \sigma &=\frac{12+\rho^2}{36\rho(2+\rho)}+\frac{(-2+\rho)^2}{24\rho^2}\log\left(\frac{2-\rho}{2+\rho}\right),\quad\rho\in]0,2[\\
\begin{eqnarray}
\xi^f_t=\frac{10+8\sqrt{1+54\lambda^f_t}}{288\lambda^f_t-3},\quad\xi^f_t>2,
\end{eqnarray}
where
\begin{eqnarray*}
\lambda^f_t=\Cov\left(F_{G_t^f|(G_t^f<\Phi^{-1}(p^f_t))}(G_t^f|(G_t^f<\Phi^{-1}(p^f_t))),F_{\LGD_t^f|(X_t^f=1)}(\LGD_t^f|(X_t^f=1))\right).
\end{eqnarray*}
\end{lemma}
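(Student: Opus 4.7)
The plan is to compute $\lambda^f_t$ explicitly as a function of $\xi^f_t$ and then invert the resulting relation. Suppressing indices, set $U := F_{G|G<\Phi^{-1}(p)}(G|G<\Phi^{-1}(p))$; by the probability integral transform $U$ is uniform on $[0,1]$ and independent of the auxiliary uniform $V$, and the product $H = U(1+\xi(V-1/2))$ is precisely the random variable of Lemma \ref{Lemma1}. By construction $\LGD = F_{\LGD|X=1}^{-1}(F_H(H))$, hence $F_{\LGD|X=1}(\LGD|X=1) = F_H(H)$, which is again uniform on $[0,1]$. Both quantities entering $\lambda$ therefore have mean $1/2$, so the defining equation for $\xi$ reduces to $\lambda = \mathbb{E}[U F_H(H)] - 1/4$.

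To evaluate this expectation I would rewrite it as a double integral over $(u,v)\in[0,1]^2$, change variables $v\mapsto h := u(1+\xi(v-1/2))$ at fixed $u$ (Jacobian $dv=dh/(u\xi)$), swap the order of integration, and split the resulting $h$-region at $h=0$. This produces
$$\mathbb{E}[U F_H(H)] = \frac{1}{\xi}\int_{0}^{1+\xi/2} F_H(h)\left(1 - \frac{h}{1+\xi/2}\right) dh + \frac{1}{\xi}\int_{1-\xi/2}^{0} F_H(h)\left(1 + \frac{h}{\xi/2-1}\right) dh.$$
Substituting the two branches of $F_H$ from Lemma \ref{Lemma1} and rescaling by $h=(1+\xi/2)u'$ on the positive piece and $h=-(\xi/2-1)u'$ on the negative one reduces every term to combinations of $\int_0^1 (u')^k\,du'$ and the log-moments $\int_0^1 (u')^k \log u'\,du' = -1/(k+1)^2$. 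Collecting the pieces exhibits $\lambda$ as a rational function of $\xi$; clearing denominators rewrites the equation $\lambda=\lambda(\xi)$ as a quadratic in $\xi$ whose coefficients depend affinely on $\lambda$ and whose discriminant is proportional to $1+54\lambda$, after which the quadratic formula together with the side constraint $\xi>2$ selects the stated root.

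The main obstacle is the careful bookkeeping of the piecewise integration: the logarithmic contributions $(h/\xi)\log((1+\xi/2)/h)$ and $(h/\xi)\log((\xi/2-1)/|h|)$ must be tracked independently on $\{h>0\}$ and $\{h<0\}$, and it is the algebraic cancellations among those logarithmic terms—and among the asymmetric support bounds $1\pm\xi/2$—that ultimately guarantee that the relation between $\lambda$ and $\xi$ remains algebraic rather than transcendental, so that the quadratic formula can be invoked at the end.
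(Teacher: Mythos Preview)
Your approach is essentially the same as the paper's: both reduce the covariance to $\Cov(F_H(H),U)$ with $U,V$ independent uniforms on $[0,1]$, invoke the explicit form of $F_H$ from Lemma~\ref{Lemma1}, integrate over the unit square, and then invert the resulting relation for $\xi$. The paper's proof is in fact terser than yours---it simply states that the double integral ``can be explicitly computed'' and yields the displayed formula---whereas you spell out a workable integration scheme (change of variables $v\mapsto h$, Fubini, splitting at $h=0$, and reducing the log terms to the moments $\int_0^1 t^k\log t\,dt$); this extra detail is sound and does not depart from the paper's line of argument.
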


\begin{remark} [Calibration of Model B] The model can be calibrated by setting the value of $\lambda^f_t$ to the historical covariance of the appropriate quantities in the industry-region cell $(i_f,r_f)$ before time $t$, i.e.:
\begin{eqnarray*}
\Cov_{t-1}^{i_f,r_f,\text{Stat}}\left(F_{G_t^f|(G_t^f<\Phi^{-1}(p^f_t))}(G_t^f|(G_t^f<\Phi^{-1}(p^f_t))),
F_{\LGD_t^f|(X_t^f=1)}(\LGD_t^f|(X_t^f=1))\right).
\end{eqnarray*}
The parameters $\xi^f_t$ which then depend only on the particular industry-region cell $(i_f,r_f)$ can be subsequently obtained by using the equation in the lemma. Note that by the Cauchy-Schwarz inequality it holds
\begin{eqnarray}
\lambda^f_t\leq\frac{1}{12}=0.083=:\lambda_{max}
\end{eqnarray}
with equality iff the two random variables are perfectly correlated. Hence, we expect that the estimates for $\lambda^f_t$ lie in $[\frac{1}{7}\lambda_{max},\frac{2}{3}\lambda_{max}]$ in which case $\xi^f_t>2$ and the equation in the lemma can be applied. If the estimate falls in one of the intervals $[-\lambda_{max},\frac{1}{7}\lambda_{max}[$ or $]\frac{2}{3}\lambda_{max},\lambda_{max}]$, one should replace it by $\frac{1}{7}\lambda_{max}$ or $\frac{2}{3}\lambda_{max}$, respectively.
\end{remark}
\begin{proof}[Proof of Lemma 4.2]
For the ease of notation we drop all indices. Then, we have
\begin{eqnarray*}
F_{\LGD|(X=1)}(\LGD|(X=1))=F_H(H)=F_H\left(U\left(1+\xi\left(V-\frac{1}{2}\right)\right)\right),
\end{eqnarray*}
where we have introduced the random variable
\begin{eqnarray*}
U=F_{G|(G<\Phi^{-1}(p))}(G|(G<\Phi^{-1}(p)),
\end{eqnarray*}
so that $U$, $V$ satisfy the assumptions of Lemma \ref{Lemma1}. Therefore,
\begin{eqnarray*}
\lambda&=&\Cov(F_{\LGD|(X=1)}(\LGD|(X=1)),F_{G|(G<\Phi^{-1}(p))}(G|(G<\Phi^{-1}(p)))\\
       &=&\Cov(F_H(H),U)=\Cov\left(F_H\left(U\left(1+\xi\left(V-\frac{1}{2}\right)\right)\right),U\right)
\end{eqnarray*}
which can be explicitly computed, because the joint density of $U$ and $V$ is the indicator function of the unit square and $F_H$ is known from Lemma \ref{Lemma1}. The result of the computation is the expression for $\xi$ displayed in the lemma statement.\\
\end{proof}

\section{Impact analysis for a sample portfolio}
\subsection{Portfolio}
We consider an invented portfolio of approximatively $17000$ firms
distributed across $40$ rating classes, $9$ industry types and $14$
world regions. The rating class $1$ corresponds to the best possible
credit worthiness, while a firm displaying rating $40$ is in the
default state. The world regions and industry types considered are
shown in tables \ref{regions} and \ref{industries}, respectively.\par
To describe the portfolio we show the repartition of expected
potential losses at time $0$ for the $1$Y default horizon with
respect to rating (see Figure \ref{HistoRating}), industry (see Figure
\ref{HistoIndustry}) and region (see Figure \ref{HistoRegion}). The
potential loss at time $t$ for a firm with exposure $\EXP_t$ and
loss given default $\LGD_t$ is defined as $\PTL_t:=\LGD_t\cdot\EXP_t$.

\subsection{Calibration, simulation and numerical results}
To calibrate the parameters of the default model we utilized the
Moody's-KMV Asset Values database covering approximatively $30000$
companies. To calibrate the parameters of the losses given default
models we employed the Moody's-KMV recovery database from which we
extracted LGDs of approximatively $5000$ occurred defaults. We
considered the time range $2001$-$2006$ with monthly quotes.\par The
models for both defaults and LGDs presented in the preceding
chapters have been implemented  as a Monte-Carlo simulation in
Matlab. By construction the correlation between simulated LGDs and
default indicators corresponds to the historical values and LGDs are
beta-distributed. The resulting statistics for the portfolio yearly
credit loss for $500000$ simulations can be found in Figure
\ref{statistics}. Hereby, in the deterministic LGD model the stochastic
LGDs of Models A and B are replaced by their respective deterministic means.
The column EL refers to the expected loss of the portfolio, the columns
Q\_90, Q\_95, Q\_99, Q\_99.95 and Q\_99.98 display several value-at-risk
quantiles of the portfolio and the columns ETL\_90, ETL\_95, ETL\_99, ETL\_99.95
and ETL\_99.98 show the expected tail losses above the respective quantiles.
\par We remark that -at least for this example - there are little tangible differences between
the output of Model A and B on the one hand and deterministic LGD,
on the other. This effect is probably due to the dominance of good
rated companies in the sample portfolio. More differences in the
tails can probably be obtained by modelling  conditional
expectations of LGDs (conditional on default events) which strongly
vary with respect to the absolute expected LGD. In both Model A and
B the expected conditional LGDs are frozen to their absolute
expectations and stochasticity is induced by second order and higher
conditional moments differing from the absolute moments. This will
be a subject of future research.

\newpage
\begin{table}
\begin{tabular}{|c|l|}
  \hline
  % after \\: \hline or \cline{col1-col2} \cline{col3-col4} ...
  Region & Countries\\\hline
  1 & Australia and New Zealand \\
  2 & Hong Kong, Korea, Malaysia,Singapore, Thailand and Taiwan \\
  3 & Rest of Asia \\
  4 & Japan \\
  5 & European Union \\
  6 & Switzerland  \\
  7 & Rest of Europe \\
  8 & United Kingdom \\
  9 & Offshore jurisdictions \\
  10 & Argentina, Brazil and Chile \\
  11 & Rest of Latin America \\
  12 & Unites States and Canada \\
  13 & Oil \\
  14 & Least developed countries \\
  \hline
\end{tabular}
\caption{World Regions}\label{regions}
\end{table}
\begin{table}
\begin{tabular}{|c|l|}
  \hline
  % after \\: \hline or \cline{col1-col2} \cline{col3-col4} ...
  Industry & Activity\\\hline
  1 & Banks, Insurance Companies and other financials \\
  2 & Manufacturing including energy and mining \\
  3 & Services\\
  4 & Health Care \\
  5 & Real Estate \\
  6 & Technology  \\
  7 & Utilities \\
  8 & Government \\
  9 & Private Customers \\
  \hline
\end{tabular}
\caption{Industry Types }\label{industries}
\end{table}

\begin{figure}[h!]
\centering
\includegraphics[width = 15cm]{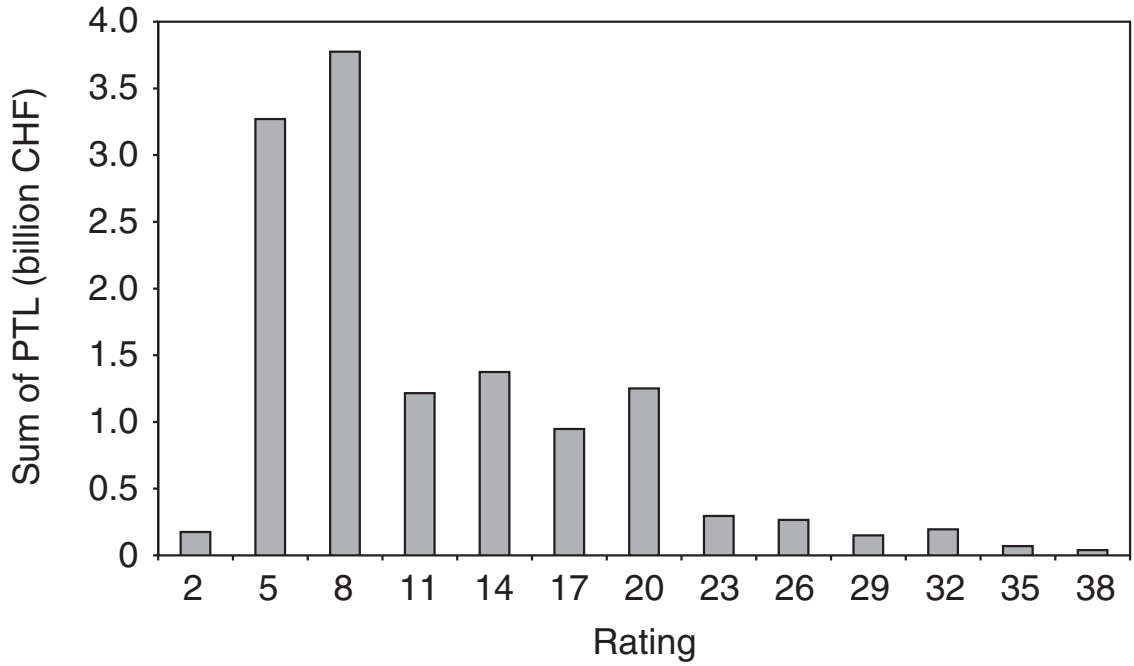}
\caption{Potential Loss vs. Rating (Billions CHF)}\label{HistoRating}
\end{figure}

\begin{figure}[h!]
\centering
\includegraphics[width = 15cm]{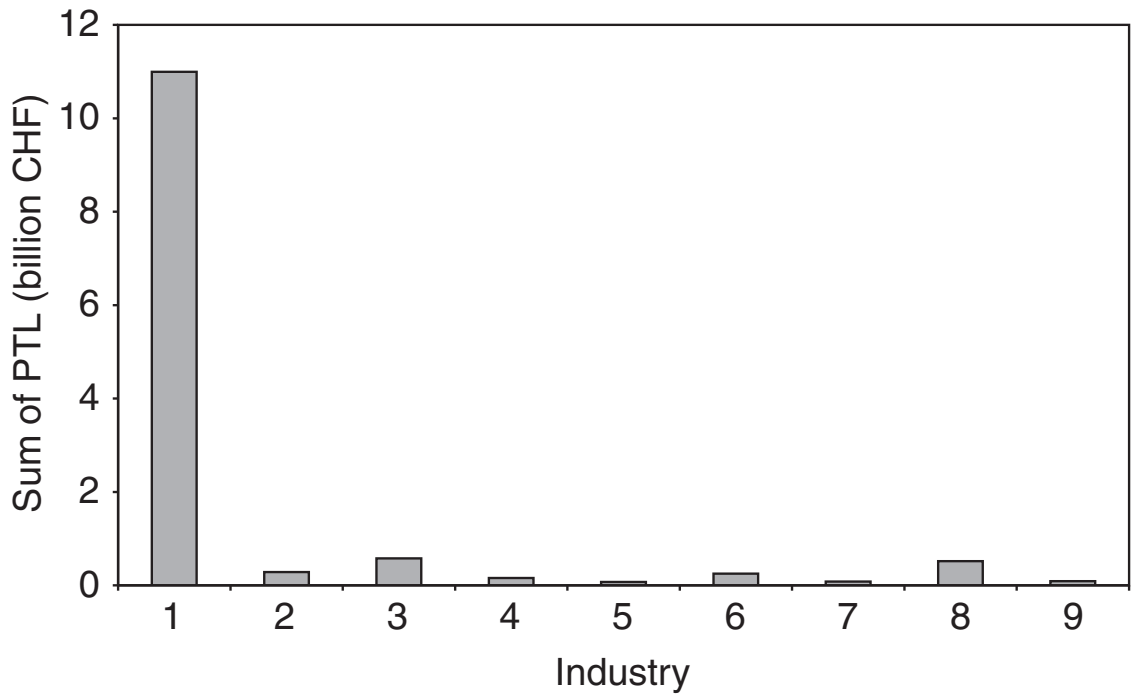}
\caption{Potential Loss vs. Industry (Billions CHF)}\label{HistoIndustry}
\end{figure}

\begin{figure}[h!]
\centering
\includegraphics[width = 15cm]{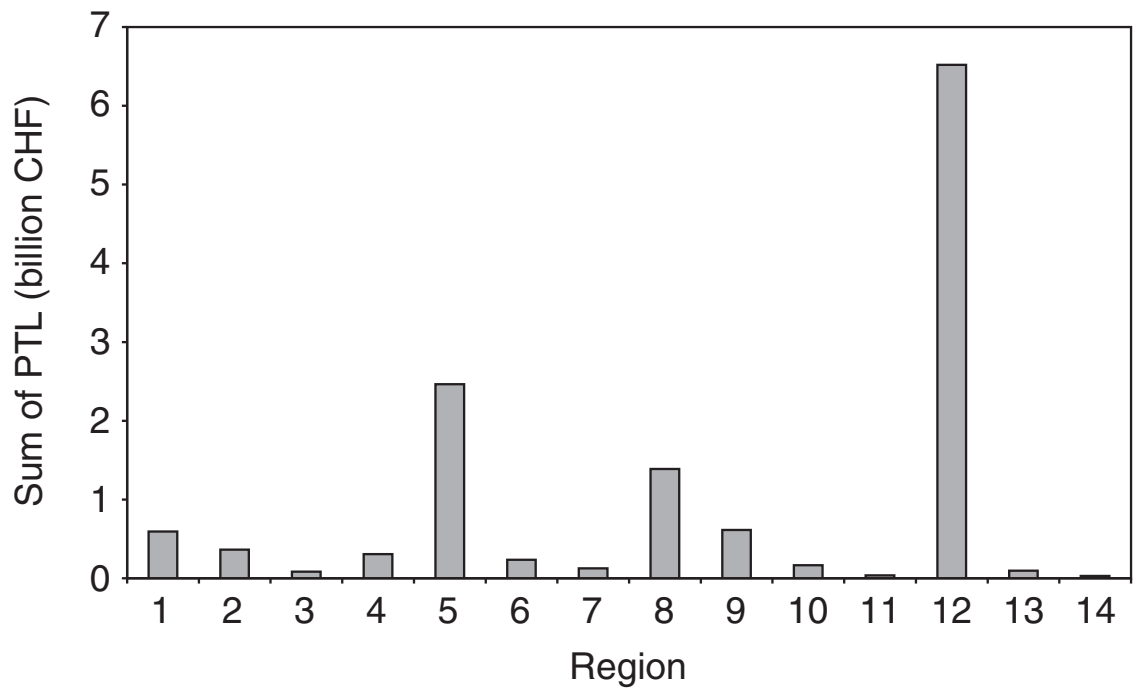}
\caption{Potential Loss vs. Region (Billions CHF)}\label{HistoRegion}
\end{figure}

\begin{table}
\begin{tabular}{|c|l|l|l|l|l|l|}
  \hline
  Model & $EL$ & $q_{90}$ & $q_{95}$ & $q_{99}$ & $q_{99.95}$ & $q_{99.98}$ \\
  \hline
  Deterministic LGD & 81.51 & 155.81 & 198.28 & 315.43 & 596.67 & 685.68 \\
  Model A & 81.46 & 155.72 & 198.64 & 315.91 & 596.58 & 678.50 \\
  Model B & 81.86 & 156.59 & 199.42 & 316.77 & 596.68 & 677.32 \\
  \hline
  \hline
  Model & & $\text{ETL}_{90}$ & $\text{ETL}_{95}$ & $\text{ETL}_{99}$ & $\text{ETL}_{99.95}$ & $\text{ETL}_{99.98}$\\
  \hline
  Deterministic LGD & & 223.47 & 272.34 & 402.54 & 686.08 & 769.06\\
  Model A & &223.43 & 272.23 & 401.17 & 678.56 & 753.33\\
  Model B & &224.19 & 273.11 & 402.64 & 678.80 & 753.78\\
  \hline
\end{tabular}
\caption{Loss Statistics (Millions of CHF)}\label{statistics}
\end{table}

\end{document}